\theoremstyle{thmstyleone}%
\newtheorem{theorem}{Theorem}%  meant for continuous numbers
\theoremstyle{thmstyletwo}%
\newtheorem{lemma}[theorem]{Lemma}
\newtheorem{rmk}[theorem]{Remark}
\theoremstyle{thmstylethree}%
\newcommand{\R}{\mathbb{R}}
\newcommand{\HH}{\mathbb{H}}
\newcommand{\QLM}{QLM}
\newcommand{\QLE}{QLE}
\newcommand{\pst}{\mathcal{M}^{3,1}}
\begin{document}

\title[Article Title]{Strong field behavior of Wang-Yau quasi-local energy}

%%=============================================================%%
%% GivenName	-> \fnm{Joergen W.}
%% Particle	-> \spfx{van der} -> surname prefix
%% FamilyName	-> \sur{Ploeg}
%% Suffix	-> \sfx{IV}
%% \author*[1,2]{\fnm{Joergen W.} \spfx{van der} \sur{Ploeg} 
%%  \sfx{IV}}\email{iauthor@gmail.com}
%%=============================================================%%              
\author*[1]{\fnm{Bowen} \sur{Zhao}}\email{bowenzhao@bimsa.cn}

\author[1]{\fnm{Lars} \sur{Andersson}}%\email{iiauthor@gmail.com}
%\equalcont{These authors contributed equally to this work.}

\author[2]{\fnm{Shing-Tung} \sur{Yau}}%\email{iiiauthor@gmail.com}
%\equalcont{These authors contributed equally to this work.}

\affil*[1]{\orgname{Beijing Institute of Mathematical Sciences and Applications}, \orgaddress{\city{Beijing}, \postcode{101408}, \country{China}}}

\affil[2]{\orgdiv{Yau Mathematical Sciences Center}, \orgname{Tsinghua University},\city{Beijing}, \postcode{100084}, \country{China}}

%\affil[3]{\orgdiv{Department}, \orgname{Organization}, \orgaddress{\street{Street}, \city{City}, \postcode{610101}, \state{State}, \country{Country}}}

%%==================================%%
%% Sample for unstructured abstract %%
%%==================================%%

\abstract{We look at the strong field behavior of the Wang-Yau quasi-local energy. In particular, we examine the limit of the Wang-Yau quasi-local energy as the defining spacelike $2$-surface $\Sigma$ approaches an apparent horizon from outside. Assuming that coordinate functions of the isometric embedding are bounded in $W^{2,1}$ and mean curvature vector of the image surface remains spacelike, we find that the limit falls in two exclusive cases: 1) If the horizon \textit{cannot} be isometrically embedded into $R^3$, the Wang-Yau quasi-local energy blows up as $\Sigma$ approaches the horizon while the optimal embedding equation is not solvable for $\Sigma$ near the horizon; 2) If the horizon \textit{can} be isometrically embedded into $R^3$, the optimal embedding equation is solvable up to the horizon with the unique solution at the horizon corresponding to isometric embedding into $\R^3$ and the Wang-Yau quasi-local mass admits a finite limit at the horizon. We discuss the implications of our results in the conclusion section.}

\keywords{Apparent Horizon, Wang-Yau quasi-local energy, strong field limit}

%%\pacs[JEL Classification]{D8, H51}

%%\pacs[MSC Classification]{35A01, 65L10, 65L12, 65L20, 65L70}

\maketitle

\section{Introduction}\label{sec:intro}
%\textcolor{red}{new introduction for CMP; spacetime satisfying dominant energy condition}

The energy-momentum tensor in general relativity gives a point-wise energy density for matter only. The equivalence principle of Einstein gravity theory excludes the existence of a local gravitational energy density.
%To preserve the notion of energy conservation, it is necessary to consider gravitational energy.
On the other hand, there exist well-defined notions of total gravitational energy for a vaccum, asymptotically flat spacetime, e.g. ADM mass at spatial infinity and Bondi mass at null infinity. A quasi-local mass aims to measure gravitational energy within a finite, extended region, whose definition remains a major unresolved problem in general relativity \cite{Penrose1982unsolved,coley2017open}.

Since there exists no local gravitational energy density, the gravitational energy of a finite region cannot be in the form of a volume integral, as in the case of matter energy. Rather, most definitions of quasi-local mass are in the form of surface integral over a space-like $2$-surface $\Sigma$. The existing proposals for quasi-local energy/mass, e.g. Hawing mass \cite{hawking1968hawkingmass}, Bartnik mass \cite{bartnik1989newqlm}, Brown-York mass \cite{brown1993quasilocal} and Liu-Yau mass \cite{liuYau2003,liuYau2006} all possess some undesirable features\cite{Szabados:2009review,wang2015four}. 
For example, the Hawking mass could be negative as well as positive even in the Minkowski spacetime while the Brown-York mass depends on the choice of a spacelike $3$-manifold bounded by the $2$-surface $\Sigma$ under consideration. Moreover, there exist surfaces in the Minkowski spacetime with strictly positive Liu--Yau mass as well as Brown--York mass \cite{murchadha2004comment}, in contrast to expectations that surfaces in the Minkowski spacetime should have zero mass. In order to overcome these difficulties,  Wang and Yau proposed a definition that incorporates both metric and momentum information about the $2$-surface under consideration \cite{wangyau2009cmp,wang2009quasilocalPRL}. Their definition is intrinsic to the spacelike $2$-surface and most importantly, the positivity was proved assuming some technical conditions.
%We refer the reader to \cite{Szabados:2009review} for a comprehensive review.

The Wang-Yau quasi-local mass was shown to be consistent with the ADM mass at spatial infinity \cite{WangYau:2010spatialinf} and Bondi mass at null infinity \cite{Chen:2010tz}. It also attains a small-sphere limit that is related to either the energy-momentum tensor (non-vacumm spacetime) or the Bel-Robinson tensor (vaccum spacetime) \cite{ChenWangYau:2018smalllimit}. In establishing these limits, the fourth order optimal embedding equation was solved perturbatively, assuming the solution is close to an isometric embedding into $R^3$.
%as a small sphere $\Sigma$ approaches to a point . %Along the same line, they also give a quasi-local definition for angular momentum and center of mass \cite{ChenWangYau:2015cmp}, which are proven to be supertranslation invariant \cite{chen2021evolution,chen2021supertranslation, chen2022supertranslation}.
%The large sphere limit of Wang-Yau quasi-local mass at spatial or null infinity as well as small sphere limit were studied in previous works \cite{WangYau:2010spatialinf,Chen:2010nulllimit,ChenWangYau:2018smalllimit}
These limits all fall in the weak field regime. It is also important to examine the limit of Wang-Yau quasi-local energy in the strong field regime. In this paper, we examine the limit of the Wang-Yau quasi-local energy as the defining spacelike 2-surface $\Sigma$ approaches an apparent horizon from outside.

Recall that a marginally outer trapped surface (MOTS) is a space-like $2$-surface whose outgoing null expansion $\Theta_+$ vanishes. Here by apparent horizon we mean the outermost MOST. If the apparent horizon is strictly stable in the sense of MOTS, one can build a horizon-crossing foliation $\Sigma(s), s\in(-\epsilon,\epsilon)$ with $\Sigma(0)=\Sigma$ (see Lemma\ \ref{lemma:foliation} below). For each leaf $\Sigma(s)$ of the foliation, consider an isometric embedding $X_s: \Sigma(s) \to \R^{3,1}$.
We prove the following theorem assuming that isometric embeddings $X_s's$ are $C^2$.
\\
\begin{theorem}\label{thm:main_C2}
    Let $\Sigma$ be a strictly stable apparent horizon whose in-going null expansion $\Theta_-$ is negative and bounded away from $0$. For each leaf $\Sigma(s)$ of a cross-horizon foliation, assume that the isometric embedding $X_s$ is $C^2$ and the mean curvature vector of the isometric embedding image $X_s(\Sigma_s)$ is space-like. Then the following two exclusive cases apply.

   1) If the apparent horizon $\Sigma$ cannot be isometrically embedded into $\R^3$, the Wang-Yau quasi-local energy blows up to $\infty$ at the horizon while the optimal embedding equation does not admit a $C^2$ solution near the horizon.
    
   2) If the apparent horizon $\Sigma$ can be isometrically embedded into $\R^3$, the optimal embedding equation is solvable up to $\Sigma$ with the unique solution at the horizon corresponding to the isometric embedding into $R^3$. The Wang-Yau quasi-local mass admits a finite limit at the horizon
   $$\frac{1}{8\pi}\int_\Sigma |H_0|$$
   
\end{theorem}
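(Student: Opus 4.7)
The plan is to reformulate the Wang--Yau energy so as to isolate its singular dependence on $|H|$ at the horizon, and then to dispatch the two cases using the Weyl--Cohn-Vossen isometric embedding theory for surfaces in $\R^3$. The key reformulation uses the identity
\begin{equation*}
(1+|\nabla\tau|^2)|H_0|^2+(\Delta\tau)^2 = k^2(1+|\nabla\tau|^2),
\end{equation*}
where $k$ is the mean curvature in $\R^3$ of the spatial projection $Y:\Sigma\to\R^3$ of the embedding $X=(\tau,Y)$ with respect to the modified metric $\sigma+d\tau\otimes d\tau$. In the resulting expression, the only ingredient that is singular as $|H|\to 0$ is the normal-bundle connection $\alpha_H$, which up to bounded terms equals the exterior derivative of the hyperbolic angle $\theta=\sinh^{-1}(\Delta\tau/(|H|\sqrt{1+|\nabla\tau|^2}))$. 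On each leaf $|H_s|^2 = -\Theta_+(s)\Theta_-$, and strict stability together with Lemma \ref{lemma:foliation} gives $\Theta_+(s)=O(s)$ while $\Theta_-$ stays bounded away from zero, so $|H_s|\to 0$ at a controlled rate. Whether $\theta_s$ blows up is then decided entirely by the limiting behaviour of $\Delta\tau_s$.

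\textbf{Case 2 (embeddable horizon).} By Weyl--Cohn-Vossen there is a unique convex isometric embedding $Y^\ast:\Sigma\to\R^3$. The candidate limit $(\tau,Y)=(0,Y^\ast)$ satisfies the $s=0$ OEE trivially, since $\Delta\tau=0$ kills all singular terms. I would then perturb this seed via a (possibly degenerate) implicit function theorem or an asymptotic expansion in $s$ to produce $C^2$ OEE solutions $\tau_s\to 0$ on nearby leaves. At the horizon the singular integral drops out and the energy density reduces to $k^\ast=|H_0|$, yielding the limit $\frac{1}{8\pi}\int_\Sigma |H_0|$.

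\textbf{Case 1 (non-embeddable horizon).} For any family $X_s$ of $C^2$ admissible embeddings, any $C^2$ subsequential limit $X_0$ must have $\tau_0$ non-constant, since otherwise $Y_0$ would be an isometric embedding of $(\Sigma,\sigma)$ into $\R^3$. Because $\Sigma$ is a topological $2$-sphere it admits no nonconstant harmonic functions, so $\Delta\tau_0\not\equiv 0$. On the positive-measure set where $|\Delta\tau_s|$ is bounded below while $|H_s|\to 0$, $|\theta_s|\to\infty$ pointwise, and a careful estimate shows the $\alpha_H$-contribution to $E$ diverges while the remaining terms stay bounded under the $C^2$ hypothesis; hence $E\to\infty$. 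Non-existence of a $C^2$ OEE solution near $\Sigma$ follows by the same reasoning: a uniformly $C^2$ family of OEE solutions would furnish a $C^2$ limit at $s=0$ solving the degenerate OEE, which by the $S^2$ harmonic argument would force $\tau_0\equiv\text{const}$ and thus produce an $\R^3$ isometric embedding of $\Sigma$, contradicting Case 1.

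\textbf{Main obstacle.} The delicate step is the quantitative divergence estimate in Case 1: one must verify that the blow-up of $\theta_s$ in the $\alpha_H$-integral is neither cancelled by the bounded $k$-contribution nor removed by a gauge choice in the normal bundle of $\Sigma$. This requires careful tracking of normal frames as the mean curvature vector $H_s$ becomes null ($|H_s|^2\to 0$). A secondary technical difficulty is the IFT step in Case 2, where the linearised OEE is degenerate at $s=0$ because $|H|=0$; this may necessitate either a non-standard IFT adapted to the degeneracy or an explicit asymptotic expansion of $\tau_s$ in $s$ using the leading-order data supplied by strict stability.
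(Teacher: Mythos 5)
Your overall skeleton matches the paper's: isolate the $|H|$-singular term coming from the hyperbolic angle $\theta=\sinh^{-1}\!\big(-\Delta\tau/(|H|\sqrt{1+|\nabla\tau|^2})\big)$, show everything else is bounded along the foliation, and then run the dichotomy on whether $(\Sigma,\sigma)$ embeds isometrically into $\R^3$ (with the harmonic-function observation forcing $\Delta\tau_0\not\equiv 0$ for non-constant limits). But the step you yourself flag as the ``main obstacle'' is precisely the paper's central lemma, and you have not supplied it. Saying ``$|\theta_s|\to\infty$ pointwise, and a careful estimate shows the contribution diverges'' does not establish the claimed conclusion $QLE\to+\infty$: $\theta_s\to+\infty$ where $\Delta\tau<0$ and $\theta_s\to-\infty$ where $\Delta\tau>0$, and since $\int_\Sigma\Delta\tau=0$ the two regions both have positive measure, so a priori the contributions could cancel or the integral could oscillate. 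The paper resolves this by integrating by parts to $\mathcal{I}=\int_\Sigma\sqrt{1+|\nabla\tau|^2}\,h(z)$ with $z=\Delta\tau/\sqrt{1+|\nabla\tau|^2}$, $h(z)=z\ln(\sqrt{z^2+a^2}-z)$, $a=|H|$, and checking the three regimes $z\le0$, $0<z\le a$, $z>a$: the integrand is uniformly bounded above, bounded on $\{z\le0\}$, and tends to $-\infty$ like $z\ln(a^2/2z)$ on $\{z>a\}$. This one-signedness is what converts ``divergence'' into ``blow-up to $+\infty$'' and cannot be omitted. Note also that along the foliation of Lemma~\ref{lemma:foliation} the form $\alpha_H$ itself stays \emph{bounded} (that is the point of the lemma); the singular object is $\nabla\theta$, not $\alpha_H$, so your statement that $\alpha_H$ ``up to bounded terms equals'' $\nabla\theta$ is not correct as written.

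Two further gaps. For the non-solvability of the optimal embedding equation you assert that a $C^2$ limit solving ``the degenerate OEE'' at $s=0$ would force $\tau_0\equiv\const$, but you give no reason; the paper instead multiplies the OEE by $\tau$ and integrates to get the solvability condition $\int_\Sigma\tau\Delta y=\int_\Sigma\nabla\tau\cdot[\rho\nabla\tau-\nabla\theta_0-\alpha_{H_0}-\cdots]$, whose right side is bounded while the left side blows up for non-constant $\tau$ by the same $h(z)$ estimate — this is the concrete obstruction you are missing. In Case 2, your implicit-function-theorem step is, as you note, degenerate at $s=0$ and is not carried out; the paper sidesteps it by invoking the formal series solution $\tau=s\,\tau^{(1)}+\cdots$ from Chen's thesis and proves uniqueness of the constant limit by contradiction (a non-constant limiting solution would make $QLE$ blow up, while $QLM=\frac{1}{8\pi}\int_\Sigma\rho$ is manifestly finite when the OEE holds). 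So the proposal is a reasonable outline, but the quantitative sign-definite estimate and the integrated solvability obstruction — the two ideas that actually prove the theorem — are absent.
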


\begin{rmk}\label{rmk:chen_W21}
In the construction of the Wang-Yau quasi-local energy, one needs to fix a future-pointing, unit time-like vector $T_0$ in $\R^{3,1}$.
Let $\pi:\R^{3,1}\to \R^3$ be the projection along $T_0$ and $\hat{k}$ be the mean curvature of $\pi\circ X_s (\Sigma(s))$ in $\R^3$. If one further assumes that $\hat{k}\in L^1$, one can relax the $C^2$ condition on isometric embedding $X_s$ used in Theorem \ref{thm:main_C2} to $X_s$ in  $W^{2,1}$ \cite{ChenPN}. This weaker assumption allows coordinate functions of the isometric embedding $X_s$ to blow up but the same conclusions in Theorem \ref{thm:main_C2} hold. We note that a different method to prove the second case was shown in an independent study \cite{chen2024near}.
\end{rmk}

%We find that the Wang-Yau quasi-local energy blows up as $\Sigma$ approaches the horizon, unless the horizon can be isometrically embedded into $\R^3$. Moreover, if the horizon can not be isometrically embedded into $\R^3$, the optimal embedding equation is not solvable for surfaces near the horizon. 

The rest of his paper is organized as follows. We briefly review the setup of Wang-Yau quasi-local energy/mass in section \ref{sec:review}. We examine the limit of Wang-Yau quasi-local energy and solution to the optimal embedding equation as $\Sigma$ approaches the horizon from outside in section \ref{sec:limit_horizon}. We discuss the implication of our results about the Wang-Yau quasi-local energy and the optimal embedding equation in section \ref{sec:concludingrmk}.

\section{Review of Wang-Yau quasi-local energy}\label{sec:review}
\subsection{Definition of Wang-Yau quasi-local energy}
The Wang-Yau quasi-local energy is defined as a difference between a reference term and a physical term, like Brown-York mass and Liu-Yau mass. This ensures that for surfaces lying in the reference spacetime, typically taken as $\R^{3,1}$, the quasi-local energy vanishes.
Let $\Sigma$ be a spacelike $2$-surface from a general spacetime $\pst$. We denote covariant derivative on $\Sigma$ by $\nabla$ and covariant derivative in $\pst$ by $\overline{\nabla}$. The Wang-Yau quasi-local energy requires the following information about $\Sigma\subset \pst$: the metric $d\sigma^2$, the mean curvature vector $H$ which is assumed to be spacelike and the connection one-form
\begin{align*}
%\alpha_v = \langle \overline{\nabla} v, u\rangle, \quad
\alpha_H = \langle \overline{\nabla} \frac{-H}{|H|}, \frac{J}{|H|}\rangle
\end{align*}
where $|H|=|J|=\sqrt{\langle H,H\rangle}$ with $J$ the reflection of $H$ along the ingoing null cone in the normal bundle $N\Sigma$. 

Let $X:\Sigma \to \R^{3,1}$ be an isometric embedding and denote the image by $\Sigma_0=X(\Sigma)$. Also fix a future-pointing, timelike, unit vector $T_0$ in $\R^{3,1}$. Let $\pi:\R^{3,1}\to \R^3$ be the projection along $T_0$.
Denote the projected image by $\hat{\Sigma}=\pi\circ X (\Sigma)$.
One records the following information about the isometric embedding $\Sigma_0$, the time function along $T_0$
$$\tau = -\langle X, T_0\rangle$$ 
the mean curvature vector $H_0$ and the connection one-form 
\begin{align*}
%\alpha_v = \langle \overline{\nabla} v, u\rangle, \quad
\alpha_{H_0} = \langle \nabla^{\R^{3,1}} \frac{-H_0}{|H_0|}, \frac{J_0}{|H_0|}\rangle
\end{align*}
where $\nabla^{\R^{3,1}}$ is the covariant derivative for $\R^{3,1}$ and $J_0$ the reflection of $H_0$ along the ingoing null cone in the normal bundle $N\Sigma_0$. 

The isometric embedding $X$ induces a natural bijection between the two tangent spaces $T\Sigma$ and $T\Sigma_0$. One builds a bijection between the two normal bundles $N\Sigma$ and $N\Sigma_0$ through the canonical gauge condition
\begin{equation}\label{eq:canonical_gauge}
 \langle T_0, H_0\rangle = \langle T,H\rangle   
\end{equation}
The unit, future-pointing, time-like vector $T_0$ ($T$) determines a unique timelike unit normal $u_0$ ($u$) for $\Sigma_0$ ($\Sigma$) through
 $$T_0 = \sqrt{1+|\nabla\tau|^2} u_0 -\nabla \tau,\quad T = \sqrt{1+|\nabla\tau|^2} u -\nabla \tau$$
Complete the basis for the normal bundle through
$$\langle v_0, u_0\rangle =\langle v_0, T_0 \rangle=\langle v, u\rangle =\langle v, T \rangle=0$$
This yields a \textit{canonical} basis $\{v_0,u_0\}$ for the normal bundle $N\Sigma_0$ and  a \textit{canonical} basis $\{v, u\}$ for the normal bundle $N\Sigma$.
The bijection between $N\Sigma$ and $N\Sigma_0$ is built through $u\longleftrightarrow u_0$ and $v\longleftrightarrow v_0$.

Further, \cite{wang2009isometric} employed the frame independent mean curvature vector $H_0$ and its conjugate vector $J_0$ to define a \textit{reference} basis for $N\Sigma_0$, $$e_{H_0} = -\frac{H_0}{|H_0|}, e_{J_0} = \frac{J_0}{|H_0|}$$
%For a future-untrapped surface $\Sigma_0$, $e_{H_0}$ is outward pointing while $e_{J_0}$ is future pointing. 
The basis transformation between the \textit{reference} and the \textit{canonical} basis is
\begin{align*}
    v_0 &= \cosh\theta_0 \, e_{H_0} - \sinh\theta_0 \, e_{J_0}\\
    u_0 &= -\sinh\theta_0 \, e_{H_0} + \cosh\theta_0 \, e_{J_0}
\end{align*}
with $$\sinh \theta_0 = \frac{-\Delta \tau}{|H_0|\sqrt{1+|\nabla \tau|^2}}$$
Similarly, oen can define $\{e_H = -\frac{H}{|H|}, e_J=\frac{J}{|J|}\}$  as the \textit{reference} basis for $N\Sigma$. 
The basis transformation in $\pst$ is similar as above but
with 
\begin{equation}\label{eq:angle_spacelike}
    \sinh \theta = \frac{-\Delta\tau}{|H|\sqrt{1+|\nabla \tau|^2}}
\end{equation}

Given $\tau$ or given a pair $(X, T_0)$, the Wang-Yau quasi-local energy (QLE) is defined as
\begin{align}
    QLE 
    %&= \frac{1}{8\pi} \bigg( \int_{\hat{\Sigma}} \hat{k} - \int_\Sigma -\sqrt{1+|\nabla\tau|^2}\langle H, v\rangle - \alpha_{v}(\nabla\tau) \bigg) \label{eq:QLE_canonical_basis}\\
    &= \frac{1}{8\pi} \bigg( \int_{\hat{\Sigma}} \hat{k} - \int_\Sigma \sqrt{1+|\nabla\tau|^2}|H|\cosh\theta - \nabla\tau\cdot \nabla\theta -\alpha_{H}(\nabla\tau)\bigg) \label{eq:QLE_reference_basis}
\end{align}
where $\hat{k}$ is the mean curvature of $\hat{\Sigma}=\pi\circ X(\Sigma)$ in $\R^3$. The first integral is the reference term while the second term is the physical term.
We also recall that 
$$\int_{\hat{\Sigma}} \hat{k} = \int_\Sigma \sqrt{(\Delta\tau)^2+|H_0|^2(1+|\nabla\tau|^2)} - \nabla \tau \cdot \nabla\theta_0  -\alpha_{H_0}(\nabla \tau)$$

\subsection{The optimal embedding equation}
By a theorem of Pogorelov \cite{pogorelov1964embedding}, a positive definite metric on a topological $2$-sphere $\Sigma$ with Gaussian curvature $K_\Sigma >-\kappa^2$ can be isometrically embedded into a hyperboloid of $\R^{3,1}$, $\{X^\mu\in \R^{3,1}: \eta_{\mu\nu} X^\mu X^\nu = - \frac{1}{\kappa^2}\}$. So an isometric embedding $X:\Sigma \to \R^{3,1}$ is far from unique. Wang and Yau defined the minimum of quasi-local energy among all ``admissible'' 
\footnote{The admissibility condition is required in the positivity proof of Wang-Yau quasi-local mass \cite{wang2009isometric}. It consists of two conditions: 1) The Gussian curvature of $d\hat{\sigma}^2=d\sigma^2+d\tau^2$ is positive; 2) The generalized mean curvature $-\sqrt{1+|\nabla\tau|^2}\langle H,e_3'\rangle-\alpha_{e_3'}(\nabla\tau) > 0$ for some spacelike unit normal $e_3'$ determined by Jang's equation. Whether one can prove the positivity or boundedness without assuming admissbility conditions remains open. In this study, we are not concerned with the positivity so we ignore such admissbility conditions. See \cite{zhaoAY2024remarks} for a detailed dicussion.} 
isometric embeddings as the quasi-local mass. The variational problem can be solved through varying the time function $\tau=\langle X, T_0 \rangle$. The Euler-Lagrangian equation or the optimal embedding equation is 
\begin{align}
    \nabla \cdot j &=0, \\
    j &= \rho\nabla\tau - \alpha_{v_0} + \alpha_v=\rho\nabla\tau - \alpha_{H_0} -\nabla\theta_0 + \alpha_H + \nabla\theta \nonumber
\end{align}
where $\alpha_{v_0} = \langle \nabla^{\R^{3,1}} v_0, u_0\rangle = \alpha_{H_0} + \nabla \theta_0$ and $\alpha_{v} = \langle \overline{\nabla} v, u\rangle = \alpha_{H} + \nabla \theta$. The solvability of the optimal embedding equation is only proved in the large- and small-sphere limits \cite{Chen:2010tz,chen2018small}. The critical point is shown to locally minimize the quasi-local energy if the mean curvature condition $|H_0|>|H|>0$ is satisfied \cite{ChenWangYau:2014cmp}.

\subsection{Mass surface density and energy current}\label{sec:review_rho_j}
Lastly, we note that the canonical gauge \eqref{eq:canonical_gauge} is equivalent to 
\begin{align}\label{eq:canonical_gauge_u}
    \langle u_0,H_0 \rangle = \langle u, H \rangle
\end{align}
which matches the ``time'' component of mean curvature vectors. The difference in the other component of mean curvature vectors measures the gravitational energy through the mass energy surface density
$$\rho = \frac{\langle H,v\rangle- \langle H_0,v_0\rangle}{\sqrt{1+|\nabla \tau|^2}}=\frac{\sqrt{|H_0|^2+\frac{(\Delta\tau)^2}{1+|\nabla\tau|^2}}-\sqrt{|H|^2+\frac{(\Delta\tau)^2}{1+|\nabla\tau|^2}}}{\sqrt{1+|\nabla\tau|^2}}$$
In terms of $\rho$ and $j$, the quasi-local energy is
$$\QLE = \frac{1}{8\pi}\int_\Sigma \rho + j\cdot \nabla\tau$$ 
which reduces to the quasi-local mass after imposing optimal embedding equation $\nabla\cdot j=0$
$$\QLM = \frac{1}{8\pi}\int_\Sigma \rho$$

\section{Limit of Wang-Yau quasi-local energy at an apparent horizon}\label{sec:limit_horizon}
We consider the situation that the defining $2$-surface $\Sigma$ approaches an apparent horizon along a foliation. In speaking about the limit of $|H|\to 0$ below, we also implicitly refer to such a foliation. So let us start with building an explicit foliation near an horizon. %For this explicit foliation, $\alpha_H$ and $\nabla\cdot \alpha_H$ remain bounded as $|H|\to 0$. 
We only consider strictly stable horizons here and leave the marginally stable case to a separate paper.
\begin{lemma}\label{lemma:foliation}
Let $\Sigma$ be a strictly stable horizon, whose ingoing null expansion $\Theta_-<0$ is bounded away from $0$.
    Around $\Sigma$, one can build a horizon crossing foliation $\Sigma(s)$ for $s\in (-\epsilon,\epsilon)$, with $\Sigma =\Sigma(s=0)$ and positive $s$ labelling surfaces lying outside $\Sigma$ while negative $s$ labelling surfaces lying inside $\Sigma$. For this foliation, $\frac{\nabla|H|}{|H|}$, $\alpha_H$ and $\nabla\cdot \alpha_H$ all remain bounded as $|H|\to 0$.
\end{lemma}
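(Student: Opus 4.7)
The plan is to construct the foliation using strict stability, then reduce the three boundedness statements to estimates on $\Theta_+$ through the identity $|H|^2 = -2\Theta_+\Theta_-$ and a boost-type decomposition of $\alpha_H$. More precisely, I would first use strict stability to produce a strictly positive principal eigenfunction $\phi$ of the MOTS stability operator $L$ (the linearization of $\Theta_+$ under outward variations) with eigenvalue $\lambda_1>0$. Fixing any spacelike hypersurface through $\Sigma$ with outer unit normal $\nu$ along $\Sigma$, and extending $\phi$ smoothly to a neighborhood of $\Sigma$ in this hypersurface, I would define $\Sigma(s)$ as the image of $\Sigma$ under the time-$s$ flow of $\phi\,\nu$. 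For $|s|$ small this gives a smooth horizon-crossing foliation, with $s>0$ labelling the outer leaves.

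By construction the normal variation of $\Theta_+$ at $s=0$ is $L\phi=\lambda_1\phi$, so Taylor expansion gives
\begin{equation*}
\Theta_+\bigl|_{\Sigma(s)} = s\,\lambda_1\,\phi + O(s^2),
\end{equation*}
with the remainder understood in $C^2$ on $\Sigma$. Since $\Theta_-$ is negative and bounded away from zero on $\Sigma$ and depends smoothly on the leaf, the same continues to hold on $\Sigma(s)$ for $|s|$ small. The identity $|H|^2 = -2\Theta_+\Theta_-$ in a null frame then yields $|H|^2 = 2 s\,\lambda_1\,\phi\,|\Theta_-|\,(1+O(s))$, so $|H|>0$ on outer leaves and
\begin{equation*}
\frac{\nabla|H|}{|H|} = \frac{1}{2}\frac{\nabla\Theta_+}{\Theta_+}+\frac{1}{2}\frac{\nabla\Theta_-}{\Theta_-} = \frac{1}{2}\frac{\nabla\phi}{\phi}+\frac{1}{2}\frac{\nabla\Theta_-}{\Theta_-}+O(s),
\end{equation*}
which is uniformly bounded because $\phi$ and $-\Theta_-$ are strictly positive and smooth on the compact surface $\Sigma$.

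For $\alpha_H$ I would extend a smooth null normal frame $(\ell_+,\ell_-)$, normalized by $\langle\ell_+,\ell_-\rangle=-1$, to a spacetime neighborhood of $\Sigma$, and set $\omega=\langle\overline{\nabla}\ell_+,\ell_-\rangle$. Since this frame is regular across the horizon, $\omega$ and $\nabla\cdot\omega$ are uniformly bounded on all nearby leaves. The orthonormal frame $(e_H,e_J)$ attached to the mean curvature vector is related to the orthonormal frame built from $\ell_\pm$ by a hyperbolic boost with parameter $\eta=\tfrac{1}{2}\log\bigl|\Theta_+/\Theta_-\bigr|$ (up to an additive constant), so the normal connection 1-form transforms by an exact piece and
\begin{equation*}
\alpha_H = \omega + d\eta = \omega + \frac{1}{2}\frac{\nabla\Theta_+}{\Theta_+}-\frac{1}{2}\frac{\nabla\Theta_-}{\Theta_-},
\end{equation*}
which is bounded by the same reasoning as above. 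Taking the tangential divergence produces additional terms of the form $\Delta\Theta_+/\Theta_+$ and $|\nabla\Theta_+|^2/\Theta_+^2$; by the expansion of $\Theta_+$ these converge to $\Delta\phi/\phi$ and $|\nabla\phi|^2/\phi^2$ respectively, both bounded by strict positivity of $\phi$.

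The delicate technical point is justifying that the $O(s^2)$ remainder in the expansion of $\Theta_+$ and its tangential derivatives up to second order are uniformly controlled as $s\to 0^+$. This is precisely where strict stability enters: $\phi$ is bounded away from zero on the compact surface $\Sigma$, so the leading term $s\lambda_1\phi$ dominates the higher-order remainder uniformly, and all the logarithmic derivatives appearing above stay finite up to the horizon.
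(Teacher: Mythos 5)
Your proposal is correct and follows essentially the same route as the paper: construct the foliation by flowing along $\varphi\,\nu$ with $\varphi$ the positive principal eigenfunction of the MOTS stability operator, use $\Theta_+\sim s\lambda\varphi$ to bound $\nabla\Theta_+/\Theta_+$, and decompose $\alpha_H$ into a regular null-frame term plus exact logarithmic-derivative terms in $\Theta_\pm$. Your treatment is in fact slightly more explicit than the paper's on the divergence terms $\Delta\Theta_+/\Theta_+$ and $|\nabla\Theta_+|^2/\Theta_+^2$ and on controlling the $O(s^2)$ remainder, but the argument is the same.
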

\begin{proof}
    Let us consider the outside region of the horizon, i.e. the untrapped region. A similar argument applies to the trapped region by reversing the direction of the normal. First recall that an apparent horizon $\Sigma$ as the outermost MOTS is stable \cite{andersson2011jangreview}. Let $L_\Sigma$ be the MOST stability operator for $\Sigma$, which then has nonnegative principal eigenvalue $\lambda\geq 0$. Since we assume $\Sigma$ is strictly stable, $\lambda>0$. Let $\varphi$ be the corresponding positive eigenfunction, i.e.
    $L_\Sigma \, \varphi = \lambda \,\varphi > 0$.
    Consider the flow
    $$\frac{d F}{ds} = \varphi \, \nu$$
    with $F(s=0)=\Sigma$ being the horizon. Recall that the variation of outgoing null expansion at $\Sigma$ under this flow is
    $$\frac{d\Theta_+}{d s}|_{s=0}=L_\Sigma \, \varphi =\lambda \varphi > 0$$ %D\Theta_+(\varphi\nu) = 
    For small $s$, as $\Sigma_s$ approaches $\Sigma$, i.e. $s \to 0^+$,
    $$\frac{{\nabla}\Theta_+}{\Theta_+} \approx  \frac{ {\nabla} (\lambda\varphi s)}{\lambda \varphi s} = \lambda\frac{{\nabla} \varphi}{\varphi} <\infty$$
    On the other hand,
    $$\frac{{\nabla} |H|}{|H|} = \frac{{\nabla} \sqrt{-\Theta_+\Theta_-}}{-\Theta_+\Theta_-} = \frac{1}{2}\frac{{\nabla} \Theta_+}{\Theta_+}+\frac{1}{2}\frac{{\nabla}|\Theta_-|}{|\Theta_-|} <\infty $$
    where we used the assumption $|\Theta_-|=-\Theta_-$ is bounded away from $0$.
    Recall that 
    $$\alpha_H= \frac{{\nabla} |H|}{|H|} - \frac{{\nabla} \Theta_-}{\Theta_-}-\frac{1}{2}\langle \overline{\nabla}\, l^-,l^+\rangle$$
    Then it follows directly that $\alpha_H$ and $\nabla\cdot \alpha_H$ also remain bounded. 
\end{proof}
%If an horizon $\Sigma$ is a stable MOTS, one can use the positive principal eigenvalue of the stability operator to generate a foliation around $\Sigma$. For surfaces approaching the MOTS in this family, $\frac{\nabla|H|}{|H|}=0$. Here we assume $\frac{\nabla|H|}{|H|}$ remain bounded.
\begin{rmk}
    One can actually build a cross-horizon foliation by constant expansion surfaces as shown in \cite{zhaoKW2021blowupJang}.
\end{rmk}

\subsection{Canonical gauge condition at the horizon}
We first observe that the canonical gauge condition \eqref{eq:canonical_gauge} cannot be extended to an horizon where the mean curvature vector $H$ becomes null.
Consider a future-untrapped $2$-surface $\Sigma$ with outgoing null expansion $\Theta_+>0$ and ingoing null expansion $\Theta_-<0$. A timelike, future-pointing, unit vector $T$ can be decomposed as
$$T = N u + \Vec{N}$$ 
where $N>0$, $\Vec{N}\in T\Sigma$ and $u$ is a future-pointing, unit normal vector of $\Sigma$. 
%Let $v$ be the unit normal vector such that $\langle v,u\rangle = \langle v, T\rangle=0$. 
Let $v$ completes $u$ to be an orthonormal basis for $N\Sigma$. In this basis, one has
$$H=-kv +pu, \quad \Theta_\pm=p \pm k$$ 
where $k=-\langle H, v\rangle=\frac{\Theta_+-\Theta_-}{2}>0$ and $p=-\langle H, u\rangle$. As $\Sigma$ approaches an apparent horizon, $\Theta_+\to 0$ and $|H|=\sqrt{-\Theta_+\Theta_-} \to 0$ while $k=-p>0$ at the horizon.
Then at the horizon,
$$\langle H, T\rangle=\langle -k(u+v),  N u + \Vec{N}\rangle = kN >0$$
On the other hand, $\langle T_0,H_0\rangle$ is either identically zero or changes sign over $\Sigma$ as
    $$\langle T_0,H_0\rangle  = -\Delta \tau$$
where $H_0=\Delta X$ and $\tau=-\langle X, T_0\rangle $ is the time function for an isometric embedding $X:\Sigma \to \R^{3,1}$. Therefore, the canonical gauge \eqref{eq:canonical_gauge} cannot hold at an horizon.

Moreover, the frame-independent \textit{reference} basis $\{e_H=\frac{-H}{|H|}, e_J=\frac{J}{|H|}\}$ for the normal bundle $N\Sigma$ fails to be defined since 
%the two vectors $\{e_H=\frac{-H}{|H|}, e_J=\frac{J}{|H|}\}$ fail to be a basis as 
$H=-J$ and $|H|= 0$ at an horizon. Nevertheless, one can examine the limit of Wang-Yau quasi-local energy expression as $|H|\to 0^+$. 

\subsection{The quasi-local energy and the optimal embedding in the limit $|H|\to 0^+$}
Let us start with rewriting the Wang-Yau quasi-local energy and the optimal embedding equation. Recall that 
\begin{equation}
    \sinh^{-1}(x) = \ln(x+\sqrt{x^2+1}), \quad \forall x\in \R \label{eq:sinh}
\end{equation}
and that for $\Sigma$ with spacelike $H$, the connection $1$-form $\alpha_H $ on $\Sigma$ takes the form
\begin{align}
    \alpha_H &= \langle \overline{\nabla}\, \frac{-H}{|H|}, \frac{J}{|H|}\rangle \nonumber \\
    &=  \frac{-1}{|H|^2}\langle \overline{\nabla}\, \frac{\Theta_+ l^- + \Theta_- l^+}{2}, \frac{\Theta_+ l^- - \Theta_- l^- }{2}\rangle \nonumber \\
    &= \frac{{\nabla} |H|}{|H|} - \frac{{\nabla} \Theta_-}{\Theta_-}-\frac{1}{2}\langle \overline{\nabla}\, l^-,l^+\rangle \label{eq:alpha_H_rewrite}
\end{align}
where $l^{\pm}=u\pm v$ are normalized to satisfy $\langle l^+, l^-\rangle = -2$.
Then using \eqref{eq:sinh} and \eqref{eq:alpha_H_rewrite}, one has
\begin{align*}
\alpha_H + \nabla\theta &=   \nabla \ln|H|- \nabla\ln|\Theta_-|\,-\frac{1}{2}\langle \overline{\nabla}\, l^-,l^+\rangle + \nabla \sinh^{-1} \frac{-\Delta\tau}{|H|\sqrt{1+|\nabla\tau|^2}}\\
&= \nabla\ln \bigg[\frac{-\Delta\tau}{\sqrt{1+|\nabla\tau|^2}} + \sqrt{|H|^2+\frac{(\Delta\tau)^2}{1+|\nabla\tau|^2}}\bigg]
- \nabla\ln|\Theta_-| \, -\frac{1}{2}\langle \overline{\nabla}\, l^-,l^+\rangle
%&= \nabla y - \nabla\ln|\Theta_-| \, -\frac{1}{2}\langle \overline{\nabla}\, l^-,l^+\rangle
\end{align*}
From now on we denote $$y=\ln \bigg[\frac{-\Delta\tau}{\sqrt{1+|\nabla\tau|^2}} + \sqrt{|H|^2+\frac{(\Delta\tau)^2}{1+|\nabla\tau|^2}}\bigg]$$
We arrive at the following form of the Wang-Yau quasi-local energy (without imposing the optimal embedding equation) and the optimal embedding equation.
\begin{lemma}
The Wang-Yau quasi-local energy can be written as 
\begin{align}
    & 8\pi \, QLE = \nonumber \\
    %= \int_{\hat{\Sigma}} \hat{k} - \int_\Sigma \sqrt{(\Delta\tau)^2 + |H|^2(1+|\nabla\tau|^2)} - \nabla\tau \cdot (\alpha_H +\nabla\theta)\nonumber  \\
    & \int_{\hat{\Sigma}} \hat{k} - \int_\Sigma \sqrt{(\Delta\tau)^2 + |H|^2(1+|\nabla\tau|^2)} + \nabla\tau \cdot \nabla\ln|\Theta_-| -\frac{1}{2}\langle \overline{\nabla}_{\nabla\tau}\, l^-,l^+\rangle +\tau\Delta y \label{eq:QLE_trouble_term}
\end{align}
while the optimal embedding equation can be written as 
\begin{align}
    %\nabla\cdot\bigg[\rho \nabla\tau -\nabla\theta_0-\alpha_{H_0} +\alpha_H +\nabla \theta \bigg] =0 \nonumber \\
    %\nabla\cdot\bigg[\rho \nabla\tau -\nabla\theta_0-\alpha_{H_0} - \nabla\ln|\Theta_-| +\frac{1}{2}\langle \overline{\nabla}_{\nabla\tau}\, l^-,l^+\rangle +\nabla y \bigg] =0 \nonumber \\
    \Delta y  = -\nabla\cdot\bigg[\rho \nabla\tau -\nabla\theta_0-\alpha_{H_0} - \nabla\ln|\Theta_-| +\frac{1}{2}\langle \overline{\nabla}\, l^-,l^+\rangle \bigg]  \label{eq:OEE_trouble_term}
\end{align}
\end{lemma}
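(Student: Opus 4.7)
The plan is to treat both identities as bookkeeping consequences of the single key identity
$$\alpha_H + \nabla\theta \;=\; \nabla y - \nabla\ln|\Theta_-| - \tfrac{1}{2}\langle\overline{\nabla}\,l^-,l^+\rangle,$$
which is already assembled in the display immediately preceding the lemma from \eqref{eq:alpha_H_rewrite} and the $\sinh^{-1}$ expansion \eqref{eq:sinh}. Once this identity is available, no further geometric input is needed; the rest is substitution, one application of $\cosh^2 = 1+\sinh^2$, and one integration by parts on the closed surface $\Sigma$. I would therefore state and derive the key identity first, and then run the two computations in parallel.

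For the QLE formula, I would start from the reference-basis expression \eqref{eq:QLE_reference_basis}. The first simplification handles the physical-surface integrand: using
$$\sinh\theta = \frac{-\Delta\tau}{|H|\sqrt{1+|\nabla\tau|^2}}\quad\text{from}\ \eqref{eq:angle_spacelike},$$
the identity $\cosh^2\theta - \sinh^2\theta = 1$ gives
$$\sqrt{1+|\nabla\tau|^2}\,|H|\cosh\theta \;=\; \sqrt{(\Delta\tau)^2 + |H|^2(1+|\nabla\tau|^2)},$$
which is exactly the second integrand in the claimed formula. Next I would replace $\nabla\theta + \alpha_H$ inside $-\nabla\tau\cdot\nabla\theta - \alpha_H(\nabla\tau)$ by the right-hand side of the key identity; this produces the $\nabla\tau\cdot\nabla\ln|\Theta_-|$ and $\pm\tfrac12\langle\overline{\nabla}_{\nabla\tau}l^-,l^+\rangle$ contributions and a leftover $-\nabla\tau\cdot\nabla y$. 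Integrating the latter by parts on the closed $2$-surface $\Sigma$ yields $\int_\Sigma \tau\,\Delta y$, producing the ``trouble term'' isolated in \eqref{eq:QLE_trouble_term}.

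For the optimal embedding equation, the manipulation is even shorter. Starting from $\nabla\cdot j = 0$ with $j = \rho\nabla\tau - \alpha_{H_0} - \nabla\theta_0 + \alpha_H + \nabla\theta$, I substitute the same key identity into the last two terms to obtain $\alpha_H + \nabla\theta = \nabla y - \nabla\ln|\Theta_-| - \tfrac12\langle\overline{\nabla}l^-,l^+\rangle$, take the surface divergence (with $\Delta y = \nabla\cdot\nabla y$), and rearrange so that $\Delta y$ stands alone on the left; the remaining divergence terms gather into exactly the bracketed expression in \eqref{eq:OEE_trouble_term}. The only non-routine step here, and the main place where care is needed, is in fact the key identity itself: it requires expressing $-H$ and $J$ in the null frame $\{l^+,l^-\}$ with the normalization $\langle l^+,l^-\rangle=-2$, then using $|H|^2 = -\Theta_+\Theta_-$ to extract $\nabla\ln|H| - \nabla\ln|\Theta_-|$ as in \eqref{eq:alpha_H_rewrite}, and finally recognizing the combination $\nabla\ln|H| + \nabla\sinh^{-1}(-\Delta\tau/(|H|\sqrt{1+|\nabla\tau|^2}))$ as $\nabla y$ via \eqref{eq:sinh}. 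The only real obstacle is sign-bookkeeping in that null-frame calculation; once it is done the lemma follows by pure substitution with no further analytic input.
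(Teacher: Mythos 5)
Your proposal is correct and follows essentially the same route as the paper: the paper's own (implicit) proof is exactly the display preceding the lemma, which assembles $\alpha_H+\nabla\theta=\nabla y-\nabla\ln|\Theta_-|-\tfrac12\langle\overline{\nabla}\,l^-,l^+\rangle$ from \eqref{eq:alpha_H_rewrite} and \eqref{eq:sinh}, after which the QLE formula follows by the $\cosh^2\theta=1+\sinh^2\theta$ simplification and one integration by parts, and the optimal embedding equation by substituting into $j$ and taking the divergence. The only caveat is the sign of the $\tfrac12\langle\overline{\nabla}_{\nabla\tau}l^-,l^+\rangle$ term, on which the paper itself is not internally consistent (compare \eqref{eq:QLE_trouble_term}, \eqref{eq:OEE_trouble_term} and \eqref{eq:OEE_solvability}), so your explicit flagging of the null-frame sign bookkeeping is appropriate rather than a gap.
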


To study the limit of $|H|\to 0^+$, we begin with the following observation.
\begin{lemma}\label{lemma:bounded}
    Let $\Sigma$ be an apparent horizon whose in-going null expansion $\Theta_-$ is negative and bounded away from $0$. Consider $C^2$ isometric embedding $X: \Sigma(s) \to \R^{3,1}$. Assume that the mean curvature vector of $X(\Sigma(s))$ is space-like. Then in taking the limit $|H|\to 0^+$, the term involving $\Delta y$ is the only term that may blow up in both the quasi-local energy expression and the optimal embedding equation. 
\end{lemma}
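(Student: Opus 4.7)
The plan is to check term by term that every quantity in \eqref{eq:QLE_trouble_term} and \eqref{eq:OEE_trouble_term}, apart from the pieces containing $\Delta y$, remains bounded as $|H|\to 0^+$ along the foliation. I would split the bookkeeping into reference-side quantities (those built from the isometric embedding $X$ and its image $\Sigma_0\subset\R^{3,1}$) and physical-side quantities (those built from the geometry of $\Sigma(s)$ inside $\pst$).

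For the reference-side terms, the hypothesis that $X$ is $C^2$ gives uniform bounds for $\tau$, $\nabla\tau$, $\Delta\tau$, $|H_0|$, and for the projected mean curvature $\hat k$. The assumption that the mean curvature vector of $X(\Sigma(s))$ remains spacelike keeps $|H_0|$ bounded away from $0$. Combining these, $\theta_0=\sinh^{-1}[-\Delta\tau/(|H_0|\sqrt{1+|\nabla\tau|^2})]$, $\nabla\theta_0$, $\alpha_{H_0}$ and $\nabla\cdot\alpha_{H_0}$ are all uniformly bounded. In particular the reference integral $\int_{\hat\Sigma}\hat k$ in \eqref{eq:QLE_trouble_term} and the $\nabla\theta_0$ and $\alpha_{H_0}$ entries in \eqref{eq:OEE_trouble_term} contribute bounded terms.

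For the physical-side terms, I would invoke Lemma \ref{lemma:foliation}: along the foliation $\nabla|H|/|H|$, $\alpha_H$ and $\nabla\cdot\alpha_H$ are bounded, and since $|\Theta_-|$ is bounded above and away from $0$ one immediately obtains boundedness of $\nabla\ln|\Theta_-|$ and $\Delta\ln|\Theta_-|$. Feeding these into the decomposition \eqref{eq:alpha_H_rewrite} isolates $\langle\overline{\nabla}l^-,l^+\rangle$ as bounded, which is what one expects geometrically from a smooth choice of null frame across the horizon. It follows that the summands $\sqrt{(\Delta\tau)^2+|H|^2(1+|\nabla\tau|^2)}$, $\nabla\tau\cdot\nabla\ln|\Theta_-|$ and $\langle\overline{\nabla}_{\nabla\tau}l^-,l^+\rangle$ in \eqref{eq:QLE_trouble_term} are bounded, and so are the corresponding divergences on the right of \eqref{eq:OEE_trouble_term}. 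Writing $\rho$ as a difference of two uniformly bounded square roots divided by $\sqrt{1+|\nabla\tau|^2}$ gives $\rho$ bounded, and then $\nabla\cdot(\rho\nabla\tau)$ is bounded by the $C^2$ control on $\tau$.

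The only delicate step I anticipate is extracting boundedness of the pure null-frame connection $\langle\overline{\nabla}l^-,l^+\rangle$ from \eqref{eq:alpha_H_rewrite}, since both neighbours $\nabla\ln|H|$ and $\alpha_H$ could a priori misbehave as $|H|\to 0$. Lemma \ref{lemma:foliation} is exactly what rules this out and leaves the quantity $y=\ln[-\Delta\tau/\sqrt{1+|\nabla\tau|^2}+\sqrt{|H|^2+(\Delta\tau)^2/(1+|\nabla\tau|^2)}]$ as the lone candidate for blow-up, its argument degenerating to $0$ on the set where $\Delta\tau>0$ in the limit $|H|\to 0$. Hence $\Delta y$ in \eqref{eq:OEE_trouble_term} and $\tau\Delta y$ in \eqref{eq:QLE_trouble_term} are the only terms that can fail to stay bounded, which is precisely the assertion of the lemma.
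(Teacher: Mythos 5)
Your proposal is correct and follows essentially the same route as the paper: a term-by-term check that the reference integral, the square-root term, the $\nabla\ln|\Theta_-|$ and null-frame connection terms, and $\rho\nabla\tau$ all stay bounded under the stated hypotheses, leaving only the $\Delta y$ contributions as possible sources of blow-up. Your extra remark isolating the boundedness of $\langle\overline{\nabla}l^-,l^+\rangle$ via Lemma~\ref{lemma:foliation} is a slightly more careful articulation of a step the paper treats implicitly, but it does not change the argument.
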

\begin{proof}
    %Let $\hat{\nabla}$ denote covariant derivative on the projected image $\hat{\Sigma}=\pi\circ X(\Sigma)$. Recall that 
    %$$1+|\nabla\tau|^2=\frac{1}{1-|\hat{\nabla}\tau|^2}$$
    %So $|\nabla\tau|$ blowing up would indicate that the image of the isometric embedding turns to null locally, violating that the metric is positive definite. Also note that 
    %$$\langle H_0, u_0\rangle = \langle H_0, \frac{T_0+\nabla\tau}{\sqrt{1+|\nabla\tau|^2}}\rangle = \frac{-\Delta\tau}{\sqrt{1+|\nabla\tau|^2}} $$ It thus seems reasonable to assume boundedness of $\Delta\tau$ and $|\nabla\tau|$ for the generic case. We emphasize that $\tau$ here is not a solution to the optimal embedding equation but any smooth isometric embedding of $\Sigma$ into $\R^{3,1}$.

The reference term of the quasi-local energy
$$\int_{\hat{\Sigma}} \hat{k} = \int_\Sigma \sqrt{(\Delta\tau)^2+|H_0|^2(1+|\nabla\tau|^2)} + \Delta \tau\,\sinh^{-1}\frac{-\Delta\tau}{|H_0|\sqrt{1+|\nabla\tau|^2}}  -\alpha_{H_0}(\nabla \tau)$$
clearly remains bounded as $|H|\to 0^+$, assuming $|H_0|>0$ and $\tau\in C^2$.
%\textcolor{red}{Xie questioned this sentence that $\hat{k}$ is related to $d\sigma^2_s$.}

Among the physical terms, 
$$\sqrt{(\Delta\tau)^2+|H|^2(1+|\nabla\tau|^2)}\to |\Delta\tau|, \quad |H|\to 0^+$$ 
clearly has a bounded limit.
Since $|\Theta_-|=-\Theta_-$ is assumed to be bounded away from $0$ for the horizon under consideration, then the following two terms also have bounded limits
$$\nabla\tau \cdot \nabla\ln|\Theta_-| +\frac{1}{2}\langle \overline{\nabla}_{\nabla\tau}\, l^-,l^+\rangle$$
Therefore, under stated assumptions, all terms except $\int_\Sigma \tau\Delta y$ in the quasi-local energy expression \eqref{eq:QLE_trouble_term} remain bounded as $|H|\to 0^+$.

Now examine the optimal embedding equation. It is easy to see that
$$\rho\nabla\tau = \frac{\nabla\tau}{\sqrt{1+|\nabla\tau|^2}}(\sqrt{|H_0|^2+\frac{(\Delta\tau)^2}{1+|\nabla\tau|^2}}-\sqrt{|H|^2+\frac{(\Delta\tau)^2}{1+|\nabla\tau|^2}}) $$
remains bounded as $|H|\to 0^+$. The other terms except $\Delta y$ does not involve $|H|$ so all remain bounded in taking the limit $|H|\to 0^+$, under stated assumptions.
\end{proof}

Therefore the problem reduces to study the integral
\begin{align}\label{eq:integral_blowup}
    \mathcal{I}= \int_\Sigma \tau \Delta y &=\int_\Sigma \Delta \tau \ln \bigg[\sqrt{|H|^2+\frac{(\Delta\tau)^2}{1+|\nabla\tau|^2}} -\frac{\Delta\tau}{\sqrt{1+|\nabla\tau|^2}}\bigg]
\end{align}
\begin{lemma}\label{lemma:trouble_term}
    Assume the same conditions as in Lemma \ref{lemma:bounded}. 
    Then the integral $\mathcal{I}$ \eqref{eq:integral_blowup}
    blows up unless $\tau$ approaches to a constant function as $|H|\to 0^+$.
\end{lemma}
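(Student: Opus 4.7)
The plan is to analyze the integrand of $\mathcal{I} = \int_\Sigma \Delta\tau \cdot y$ pointwise according to the sign of $\Delta\tau$, exploiting the fact that on the closed surface $\Sigma$ the mean of $\Delta\tau$ vanishes, so a non-constant limit of $\tau$ forces both $\{\Delta\tau > 0\}$ and $\{\Delta\tau < 0\}$ to have positive measure.

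First I would record the basic asymptotics of $y$. Writing $w = \Delta\tau/\sqrt{1+|\nabla\tau|^2}$ and using the elementary identity $\sqrt{|H|^2+w^2}-w = |H|^2/(\sqrt{|H|^2+w^2}+w)$ valid for $w>0$, one finds
$$y = \ln\bigl(\sqrt{|H|^2+w^2}-w\bigr) = \begin{cases} 2\ln|H| - \ln(2w) + o(1) & \text{if } w > 0, \\ \ln(2|w|) + o(1) & \text{if } w < 0, \end{cases}$$
as $|H|\to 0^+$. Hence on $\Sigma^- := \{\Delta\tau < 0\}$ the integrand $\Delta\tau \cdot y$ has a finite pointwise limit and its contribution to $\mathcal{I}$ stays bounded, while on $\Sigma^+ := \{\Delta\tau > 0\}$ the leading behaviour is $2\Delta\tau \ln|H|$; the residual piece $-\Delta\tau\ln(2w)$ is uniformly bounded because $s\log s$ is bounded as $s \to 0^+$.

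Next I would invoke Lemma \ref{lemma:foliation} to promote the divergence of $\ln|H|$ to a uniform one on $\Sigma$. The bound $|\nabla|H|/|H|| \leq C$ gives $\ln|H|(p) = \ln|H|(p_0) + O(1)$ with the $O(1)$ independent of $|H|(p_0)$, so $\ln|H| \to -\infty$ uniformly across $\Sigma$. Therefore
$$\int_{\Sigma^+} 2\Delta\tau \ln|H| \;=\; 2\ln|H|(p_0) \int_{\Sigma^+} \Delta\tau \;+\; O\!\left(\int_{\Sigma^+} \Delta\tau\right).$$
If $\tau$ does not tend to a constant then in the limit $\Delta\tau$ is not identically zero, and combined with $\int_\Sigma \Delta\tau = 0$ this forces $\int_{\Sigma^+}\Delta\tau > 0$. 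The leading term thus drives $\int_{\Sigma^+} \Delta\tau \cdot y \to -\infty$, and together with the boundedness on $\Sigma^-$ this yields $\mathcal{I} \to -\infty$.

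The main obstacle is promoting the pointwise asymptotics to estimates that are uniform on the relevant subsets and are not spoiled by the transition layer where $\Delta\tau$ is small. This is mainly bookkeeping: Lemma \ref{lemma:foliation} handles the uniform divergence of $\ln|H|$, while the boundedness of $|s\log s|$ handles the transition zone. A minor secondary issue is that $X_s$ is only assumed $C^2$ and need not converge; one proceeds along an arbitrary sequence $s_n \to 0$ and, using Arzela--Ascoli on the bounded family $\tau_{s_n}$, extracts a subsequence converging in $C^2$ to some $\tau_\infty$; if $\tau_\infty$ is not a constant, the argument above applies to this subsequence.
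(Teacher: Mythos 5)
Your proposal is correct and follows essentially the same route as the paper: both analyze the integrand pointwise according to the sign of $\Delta\tau$ and its size relative to $|H|$, using the identity $\sqrt{|H|^2+w^2}-w=|H|^2/(\sqrt{|H|^2+w^2}+w)$ to extract the divergent $2\Delta\tau\ln|H|$ contribution on $\{\Delta\tau>0\}$ while the remaining regions stay bounded via the boundedness of $s\ln s$. Your additional bookkeeping (the observation that $\int_\Sigma\Delta\tau=0$ forces $\int_{\Sigma^+}\Delta\tau>0$ for nonconstant limits, and the uniformity of $\ln|H|\to-\infty$ from Lemma~\ref{lemma:foliation}) is left implicit in the paper but does not change the argument.
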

\begin{proof}

Let $z:=\frac{\Delta \tau}{\sqrt{1+|\nabla\tau|^2}}$ and take $|z| < C$ on the compact surface $\Sigma$.
We need to examine the limit of the function $$h(z)=z\ln(\sqrt{z^2+a^2}-z), \quad \text{as } \, a\to 0$$ 
For $z\leq 0$, as $a\to 0$
    $$h(z)= z\ln (\sqrt{z^2+a^2}+|z|) \to z\ln |2z|  \leq C\ln 2 + \max\{C\ln(C),\frac{1}{e} \} $$
    For $0<z\leq a$, let $\lambda=\frac{z}{a}$ then $0<\lambda\leq 1$. As $a\to 0$
    $$h(z) = \lambda a \ln\big[a\sqrt{1+\lambda^2}-\lambda a\big] = \lambda a\ln a + \lambda a \ln(\sqrt{1+\lambda^2}-\lambda) \to 0$$
    where we used that $\lim_{a \to 0} a\ln|a| =0$.
    
    For $z>a$, let $\lambda=\frac{a}{z}>0$, which can be arbitrarily small as $a\to 0$. Now
    $$h(z) = z\ln z + z\ln(\sqrt{1+\lambda^2}-1) \to z\ln z + z\ln (\frac{\lambda^2}{2}-\frac{\lambda^4}{8}+...)$$
    whose modulus can be arbitrary large as $|h(z)|\geq  z \ln \frac{2}{\lambda^2} - C|\ln C|$ for $\lambda \ll 1$.
    Therefore, if $\tau$ does not approach to a constant function as $|H|\to 0^+$, 
    $$\mathcal{I}= \int_\Sigma \sqrt{1+|\nabla\tau|^2} \, h(z)$$ can be arbitrarily negative.
    %; then the Wang-Yau quasi-local energy blows up to $+\infty$ as $|H|\to 0^+$.
\end{proof}

From \eqref{eq:QLE_trouble_term} and Lemma \ref{lemma:trouble_term}, it directly follows that that $\QLE \to +\infty$ as $|H|\to 0^+$, unless $\tau$ approaches to a constant function toward the horizon. We can now prove the $C^2$ version of our main result, i.e. Theorem \ref{thm:main_C2}. %We summarize in the following theorem.
\begin{proof}
If the horizon cannot be isometrically embedded into $\R^3$, $\tau$ cannot take a constant function limit and hence the Wang-Yau quasi-local energy blows up from Lemma \ref{lemma:trouble_term}. Further note that 
\begin{align}\label{eq:OEE_solvability}
    \int_\Sigma \tau \Delta y = \int_\Sigma \nabla\tau \cdot\bigg[\rho \nabla\tau -\nabla\theta_0-\alpha_{H_0} - \nabla\ln|\Theta_-| -\frac{1}{2}\langle \overline{\nabla}\, l^-,l^+\rangle \bigg]
\end{align}
is a solvability condition for the optimal embedding equation \eqref{eq:OEE_trouble_term}. The RHS remains bounded as $|H|\to 0^+$, as was argued in Lemma \ref{lemma:bounded}. However, the LHS could be arbitrarily large as $|H|\to 0^+$ if the horizon cannot be embedded into $\R^3$. Therefore, the optimal embedding equation does not admit a $C^2$ solution near the horizon.

On the other hand, if the horizon can be isometrically embedded into $\R^3$, a formal series solution to the optimal embedding equation was constructed in \cite{Chenthesis}
$$\tau = s \, \tau^{(1)}+\cdots$$
where $s$ is the foliation parameter as in Lemma \ref{lemma:foliation}. We argue here that constant $\tau$ is the unique solution to the optimal embedding equation in the limit of $|H|\to 0^+$.
If the optimal embedding equation is solvable in $C^2$, as reviewed in section \ref{sec:review_rho_j}, the Wang-Yau quasi-local mass is simply
$$QLM = \frac{1}{8\pi}\int_\Sigma \rho=\frac{1}{8\pi}\int_\Sigma \frac{\sqrt{|H_0|^2+\frac{\Delta\tau}{1+|\nabla\tau|^2}}-\sqrt{|H|^2+\frac{\Delta\tau}{1+|\nabla\tau|^2}}}{\sqrt{1+|\nabla\tau|^2}}$$ 
which clearly admits a well-defined limit as $|H|\to 0^+$. Assume the optimal embedding equation admits a non-constant solution at the horizon. This will yield a blowup quasi-local mass and hence a contradiction. Therefore, the constant $\tau$ solution is the unique solution to the optimal embedding equation in the limit $|H|\to 0^+$. 
\end{proof}

%Without solving the variational problem, one can conclude that the quasi-local mass either blows up or takes the value $\frac{1}{8\pi} \int_{{\Sigma}} |H_0| $. However, we can conclude some implications for the optimal embedding equation.

%$$\Delta y = - \bigg[\rho \nabla\tau -\nabla\theta_0-\alpha_{H_0} - \nabla\ln|\Theta_-| +\frac{1}{2}\langle \overline{\nabla}_{\nabla\tau}\, l^-,l^+\rangle \bigg]$$
%Therefore, the optimal embedding equation is not solvable near the horizon, unless the horizon under consideration can be embedded into $\R^3$. We summarize in the following theorem.

\begin{rmk}
As pointed out by Po-Ning Chen \cite{ChenPN},  if we further assume that the mean curvature $\hat{k}$ of $\hat{\Sigma}=\pi\circ X(\Sigma)$ in $\R^3$ is integrable, i.e. $ \hat{k}\in L^1$ then we can relax the $C^2$ condition on $\tau$ used in Theorem \ref{thm:main_C2} to $\tau \in W^{2,1}$ and the same conclusions as in Theorem \ref{thm:main_C2} hold. 
%We give below the proof of the $W^{2,1}$ version of our main result, i.e. Theorem \ref{thm:main_W21}. %We apply an elliptic estimate ($C$ some uniform constant) to \eqref{eq:QLE_trouble_term} and \eqref{eq:OEE_trouble_term}. 

The reference term of the quasi-local energy \eqref{eq:QLE_trouble_term} is bounded by the assumption that $\hat{k}\in L^1$. Applying the elliptic estimate \cite{martinazzi2009ellipticestimate}
$$\int_\Sigma |\nabla\tau|\leq C \int_\Sigma |\Delta\tau|$$
to the physical term, one has
\begin{align*}
    &\int_\Sigma \sqrt{(\Delta\tau)^2 + |H|^2(1+|\nabla\tau|^2)} + \nabla\tau \cdot \nabla\ln|\Theta_-| -\frac{1}{2}\langle \overline{\nabla}_{\nabla\tau}\, l^-,l^+\rangle +\tau\Delta y \\
    %&\leq C \int_\Sigma |\Delta\tau|\big[1+|H|+|\nabla \ln{|\Theta_-|}| + \frac{1}{2}|\langle\overline{\nabla} l^-,l^+\rangle|\big] + \int_\Sigma \tau \Delta y \\
    &\leq C' \int_\Sigma |\Delta\tau| + \int_\Sigma \tau \Delta y
\end{align*}
which still blows down to $-\infty$ for nonconstant $\tau$ by Lemma \ref{lemma:trouble_term}. So the conclusion about the quasi-local energy in Theorem \ref{thm:main_C2} remains true for $\tau \in W^{2,1}$.
%$$RHS \leq \int_\Sigma \tau \Delta y + C|\Delta\tau|$$
%Here we show that the optimal embedding equation can not admit an unbounded solution either for such surfaces following \cite{ChenPN}.

Rewrite the solvability condition \eqref{eq:OEE_solvability} for the optimal embedding equation as 
    \begin{align*}
        \int_{\hat{\Sigma}} \hat{k} = &\int_\Sigma \tau \, \Delta y + \frac{|\nabla\tau|^2}{1+|\nabla\tau|^2}\sqrt{(\Delta\tau)^2 + |H|^2(1+|\nabla\tau|^2)}+\nabla\tau\cdot\nabla\ln|\Theta_-| + \frac{1}{2}\langle \overline{\nabla}_{\nabla\tau}l^-,l^+\rangle \\
        &+\int_\Sigma \frac{1}{1+|\nabla\tau|^2}\sqrt{(\Delta\tau)^2 + |H_0|^2(1+|\nabla\tau|^2)}
    \end{align*}
By assumption $\hat{k}\in L^1$, the LHS is bounded while the RHS still blows down to $-\infty$ by a similar argument as above. The conclusion about the optimal embedding equation in Theorem \ref{thm:main_C2} remains true for $\tau \in W^{2,1}$.

We also note that a different method is used to prove the second case of Theorem \ref{thm:main_C2} in an independent study \cite{chen2024near}.
\end{rmk}

\begin{rmk}
    The results here are consistent with the previous extension of Wang-Yau quasi-local mass to an horizon in the numerical study \cite{pookzhao2023properties} through 
    $$QLM = \frac{1}{8\pi} \int_\Sigma |H_0|$$
    where $\tau$ was shown to approach a constant toward the horizon at a particular rate to maintain $\alpha_H+\nabla\theta=0$ ($j\equiv 0$ in this axisymmetirc, vanishing-angular-momentum case).
\end{rmk}

\begin{rmk}
%For coordinate spheres, including the horizon, of a Kerr black hole, $$\nabla\cdot \alpha_H = 0$$ 
%Then a constant $\tau$ solves the optimal embedding equation for coordinate spheres up to the horizon. 
It is known that the outer horizon (and nearby surfaces) of a rapidly rotating Kerr black hole ($a>\sqrt{3}/2 M$) has negative Gaussian curvature near the poles; furthermore, explicit calculations show that the outer horizon (and nearby surfaces) of a rapidly rotating Kerr black hole does not admit an (axisymmetric) isometric embedding into $\R^3$. Then our result implies that for a rapidly rotating Kerr black hole, the Wang-Yau quasi-local energy blows up at the horizon and the optimal embedding equation does not admit a solution for surfaces near the horizon.

In the numerical study \cite{miller2018wang}, the minimum of Wang-Yau quasi-local energy was determined numerically without solving the optimal embedding equation. Their minimal quasi-local energy (Figure 6) for surfaces close to the horizon in an extremal Kerr is consistent with a blowup limit in the Wang-Yau quasi-local energy predicted by results here.
\end{rmk}

\begin{rmk}\label{rmk:Dunajski}
When a surface $\Sigma$ cannot be isometrically embedded into $\R^3$, Dunajski and Tod \cite{dunajski2021kijowski} considered isometric embedding into the hyperbolic space. They defined a modified Liu-Yau energy as
$$E_{mLY} = \frac{1}{8\pi} \int_\Sigma H_\HH - |H|$$
where $H_\HH$ is the mean curvature of the isometric embedding into $\HH^3$ of constant section curvature $-\kappa^2$, with $-\kappa^2$ the greatest lower bound of the non-positive Gaussian curvature $K_\Sigma$. They found a finite value for their quasi-local energy even at the horizon  of an extremal Kerr. We note that the difference between their modified Liu-Yau energy and the Wang-Yau quasi-local energy already manifests in their analysis (compare Figure 3 in \cite{dunajski2021kijowski} with Figure 8 in \cite{miller2018wang}).
%In other words, constant time function $\tau=-\langle X, T_0\rangle$ should be excluded in the optimization procedure. \end{rmk}
\end{rmk}

\section{Conclusion and discussion}\label{sec:concludingrmk}

Since the optimal embedding equation is a fourth order PDE for the time function $\tau$, there is no complete analysis of the existence and uniqueness problem \cite{ChenWangYau:2014cmp}.
%its solution is difficult to characterize in general. 
However, there are cases where a solution can be easily obtained. Assuming the mean curvature vector is spacelike, i.e. $\langle H, H \rangle >0$, constant $\tau$ is an solution to the optimal embedding equation if and only if 
$$\nabla \cdot \alpha_H =0$$
as $\theta=\sinh^{-1}\frac{-\Delta\tau}{|H|\sqrt{1+|\nabla\tau|^2}}=0$ and $\alpha_{v_0}= \langle \nabla^{\R^{3,1}}v_0, T_0\rangle =0$ when $\tau=0$ and $|H|>0$. 
One can show that coordinate spheres in a Kerr (or Kerr-Newman) black hole satisfy $\nabla \cdot \alpha_H =0$. Let us start the discussion with a Kerr black hole. %,and axisymmetric, non-stationary with vanishing angular momentum spacetimes \cite{pookzhao2023properties}  of area radius $r$

A large coordinate sphere of a Kerr black hole, either in an asymptotically flat Cauchy slice or in a null cone has positive Gaussian curvature and hence can be embedded into $\R^3$ by Weyl's theorem \cite{nirenberg1953weyl,pogorelov1952regularity}. 
%So constant $\tau$ solves the optimal embedding equation as noted above and indeed yields a global minimum for the Wang-Yau quasi-local energy, i.e. the quasi-local mass\cite{ChenWangYau:2014cmp}. 
For a slowly rotating Kerr with $|a|<\frac{\sqrt{3}}{2}M$, its outer horizon has positive Gaussian curvature. So for a slowly rotating Kerr black hole, a fixed constant $\tau$ function solves the optimal embedding equation for coordinate spheres up to the horizon and yields a finite limit for the Wang-Yau quasi-local mass at the horizon. On the other hand, as noted in Remark \ref{rmk:Dunajski}, the horizon of a rapidly rotating Kerr cannot be isometrically embeded into $\R^3$. Then Theorem \ref{thm:main_C2} and Remark \ref{rmk:chen_W21} implies that the Wang-Yau quasi-local energy blows up as $|H|\to 0^+$ while the optimal embedding equation does not admit a solution near the horizon. This indicates that near the horizon of a rapidly rotating Kerr black hole, the minimum of the Wang-Yau quasi-local energy is not a critical point of the variational problem but merely lies on the boundary separating admissible $\tau$ functions from inadmissible $\tau$ functions. This is what Figure 4 of the numerical study \cite{miller2018wang} shows. It remains a question whether there exists any solution to the optimal embedding equation for a general surface in Kerr which does not admit an isometric embedding into $\R^3$. 
In summary, as the surface under consideration moves from the infinity toward the horizon in a Kerr black hole, the Wang-Yau quasi-local energy may blow up or not, depending on the value of the angular momentum. This is summarized in Figure \ \ref{fig:QLE_tau}.

One can easily generalize the above discussion with a Kerr black hole to an asymptotically flat spacetime.  Bearing in mind that the Wang-Yau quasi-local energy is invariant under the action of the Poincare group on $\R^{3,1}$, we will use $\tau=const$ and $\tau=0$ interchangeably.
For a large sphere near the infinity in an asymptotically flat spacetime, either in an asymptotically flat Cauchy slice or in a null cone, $\tau \sim o(r)$ solves the optimal embedding equation \cite{WangYau:2010spatialinf,Chen:2010tz}). Therefore for large spheres, $\tau=0$ (or any constant) is close to a solution of the optimal embedding equation and yields a local minimum for quasi-local energy.
%Thus, for large coordinate spheres in an asymptotically flat spacetime a solution to the optimal embedding equation satisfies $\tau \sim o(r)$. This is the starting point of near-infinity analysis in previous works \cite{,Chen:2010tz}, where it was shown that 
%$r_+<r<r^*$
For surfaces approaching an horizon, Theorem \ref{thm:main_C2} or Remark \ref{rmk:chen_W21} predicts that limits of the Wang-Yau quasi-local energy may not exist depending on whether the horizon can be isometrically embedded into $\R^3$ or not. Therefore, Figure \ref{fig:QLE_tau} qualitatively applies. Our study reveals new aspects of the strong field behavior of Wang-Yau quasi-local energy. 

Recall that the Brown-York energy is not defined for a space-like $2$-surface $\Sigma$ that does not admit an isometric embedding into $\R^3$ and the Wang-Yau quasi-local energy alleviates this problem by considering isometric embedding into $R^{3,1}$. However, the results here suggest that further studies are needed to fully address this problem. It could be possible that one needs a different gauge condition than \eqref{eq:canonical_gauge} or one needs to consider isometric embedding into a hyperbolic space as in \cite{dunajski2021kijowski}. Given that the holography principle is more well-established in asymptotically Anti-de Sitter (AdS) spacetime \cite{maldacena1999AdSCFT} than in asymptotically flat spacetime \cite{donnayStrominger2019conformally,donnay2022carrollian}, one might wonder that studying quasi-local mass in asymptotically AdS spacetime would give some clues. Preliminary attempts along this direction are made by \cite{chen2020dSAdS}. Defining quasi-local mass for more general space-like $2$-surfaces like higher-genus or disconnected surfaces is also of interest. A recent related definition using level set technique could help address this question \cite{AKY2024newQLM}.
Lastly, recall that Kijowski found two different forms of gravitational quasi-local energy corresponding to different boundary data. One of them plays the role of internal energy and is similar to Hawking mass while the other one plays the role of free energy and is similar to  Brown-York mass. Therefore, any possible connection between Wang-Yau (type) quasi-local energy and other definitions, e.g. Hawking mass, deserves further study, especially noting that the Hawking mass would take a finite limit of $\sqrt{\frac{A(\Sigma)}{16\pi}}$ as $\Sigma$ approaches an horizon from outside, where $A(\Sigma)$ is the surface area of $\Sigma$. 

\begin{figure}
    \centering
    \includegraphics[scale=.8,trim={3cm 7cm 0 4cm},clip]{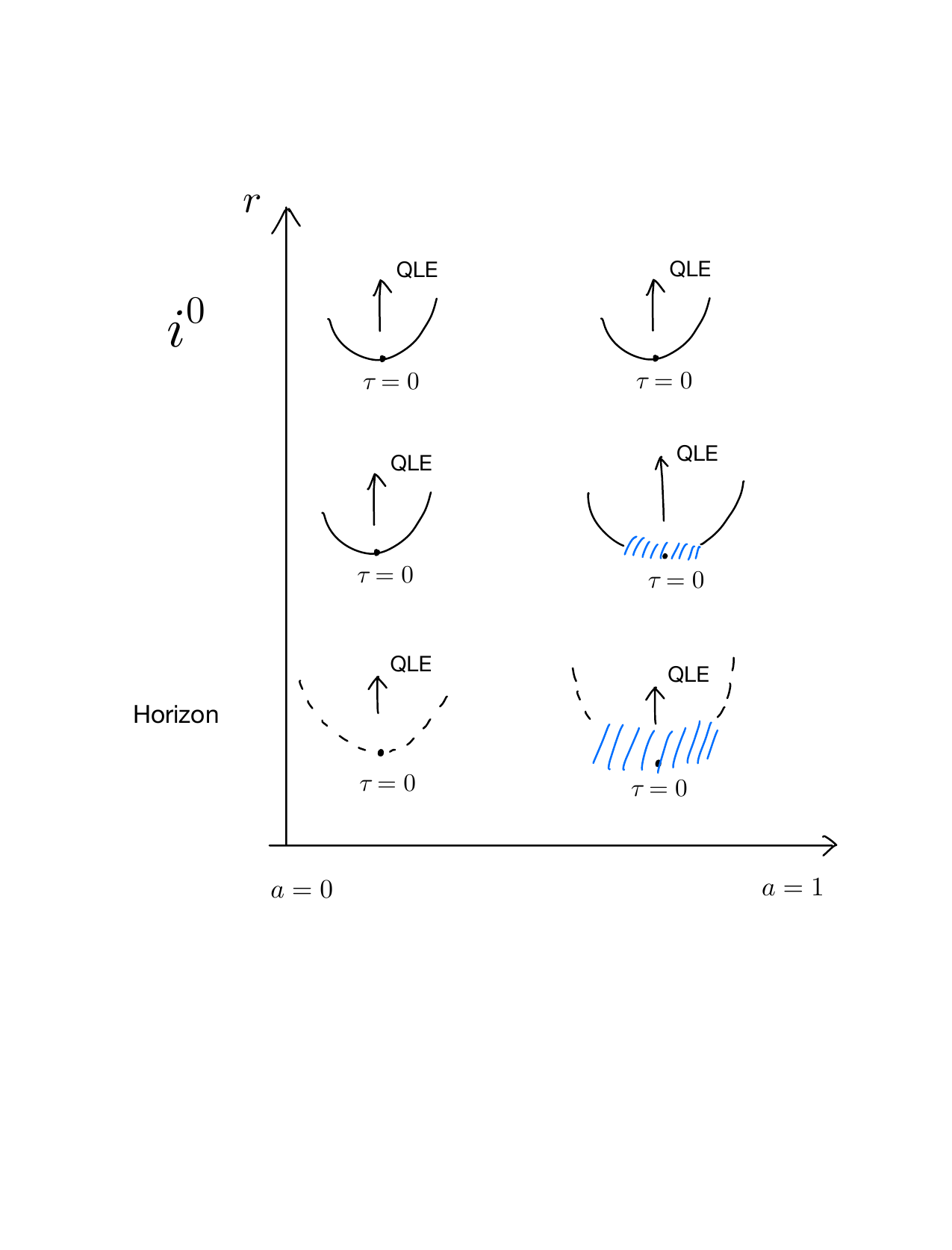}
    \caption{Illustration of Wang-Yau quasi-local energy as a functional of $\tau$-function in the $(r,a)$ space. Here $a$ measures the angular momentum of the black hole spacetime and is related to whether the horizon can be embedded into $\R^3$ while $r$ denotes the distance to the gravitational source and indicates the gravitational field strength. Dashed lines indicate that the Wang-Yau QLE blows up for such $\tau$. Blue shading indicates that isometric embedding into $\R^{3,1}$ with such $\tau$ does not exist and hence not allowed. A phase transition related to the angular momentum is manifest.}
    \label{fig:QLE_tau}
\end{figure}

\bmhead{Data availability}

We do not analyse or generate any datasets, because our work proceeds within a theoretical and mathematical approach. One can obtain the relevant materials from the references below.

\bmhead{Declarations}
The authors have no competing interests to declare that are relevant to the content of this article.

\bmhead{Acknowledgements}

We acknowledge Sergio Cecotti for his inspiring comments and discussions. We also acknowledge a discussion with Mu-Tao Wang during his visit in Beijing. We thank Po-Ning Chen for his suggestion on improving our results.

%%===========================================================================================%%
%% If you are submitting to one of the Nature Portfolio journals, using the eJP submission   %%
%% system, please include the references within the manuscript file itself. You may do this  %%
%% by copying the reference list from your .bbl file, paste it into the main manuscript .tex %%
%% file, and delete the associated \verb+\bibliography+ commands.                            %%
%%===========================================================================================%%

\bibliography{sn-bibliography}% common bib file
%% if required, the content of .bbl file can be included here once bbl is generated
%%\input sn-article.bbl

\end{document}